\tikzset{
	>=stealth',
	punktchain/.style={
		rectangle, 
		rounded corners, 
		draw=black, very thick,
		text width=15.5em, 
		minimum height=2em, 
		text centered, 
		on chain},
	line/.style={draw, thick, <-},
	element/.style={
		tape,
		top color=white,
		bottom color=blue!50!black!60!,
		minimum width=8em,
		draw=blue!40!black!90, very thick,
		text width=10em, 
		minimum height=3.5em, 
		text centered, 
		on chain},
	every join/.style={->, thick,shorten >=1pt},
	decoration={brace},
	tuborg/.style={decorate},
	tubnode/.style={midway, right=2pt},
}
\newcommand\hl[1]{\textcolor{black}{#1}}
\definecolor{pblue}{rgb}{0.13,0.13,1}
\definecolor{pgreen}{rgb}{0,0.5,0}
\definecolor{pred}{rgb}{0.9,0,0}
\definecolor{pgrey}{rgb}{0.46,0.45,0.48}
\definecolor{javapurple}{rgb}{0.5,0,0.35}
\newcommand\Q\lstinline
\newcommand{\hoare}[3]{\mathtt{\{#1\} \; #2 \; \{#3\}}}
\newcommand{\E}{\ensuremath{\mathcal{E}}}
\newcommand{\TD}{\ensuremath{\mathit{TD}}}
\newcommand{\CD}{\ensuremath{\mathit{CD}}}
\newcommand{\Body}{\ensuremath{\mathit{Body}}}
\newcommand{\MH}{\ensuremath{\mathit{MH}}}
\newcommand{\vs}{\ensuremath{\mathit{vs}}}
\newcommand{\es}{\ensuremath{\mathit{es}}}
\newcommand{\Cs}{\ensuremath{\mathit{Cs}}}
\newcommand{\Ds}{\ensuremath{\mathit{Ds}}}
\newcommand{\Ms}{\ensuremath{\mathit{Ms}}}
\newcommand{\Name}{\ensuremath{\mathit{Name}}}
\newcommand{\OK}{\ensuremath{\mathit{OK}}}
\newcommand{\Pre}{\ensuremath{\mathit{Pre}}}
\newcommand{\Post}{\ensuremath{\mathit{Post}}}
\newcommand{\newKW}{\ensuremath{\mathtt{new}}}
\newcommand{\methodKW}{\ensuremath{\mathtt{method}}}
\newcommand{\result}{\ensuremath{\mathtt{result}}}
\newcommand{\this}{\ensuremath{\mathtt{this}}}
\newcommand{\interface}{\ensuremath{\mathtt{interface}}}
\newcolumntype{b}{X}
\newcolumntype{n}{>{\hsize=.35\hsize}X}
\newcolumntype{s}{>{\hsize=.2\hsize}X}
\newcolumntype{Y}{>{\centering\arraybackslash}X}
\newcommand\cellwidth{\TX@col@width}
\let\tx@\TX@endtabularx
\def\restoretx{\let\TX@endtabularx\tx@}
\begin{document}
\renewcommand{\thelstlisting}{\arabic{lstlisting}}

\title{Traits for Correct-by-Construction Programming}
\titlerunning{Traits for Correct-by-Construction Programming}

\author{Tobias Runge\inst{1,2} \and Alex Potanin\inst{3} \and Thomas Thüm\inst{4} \and Ina Schaefer\inst{1,2}}
\institute{TU Braunschweig, Germany
	\and Karlsruhe Institute of Technology, Germany
	\and Australian National University, Australia
	\and University of Ulm, Germany\\
	\email{\{tobias.runge,ina.schaefer\}@kit.edu, alex.potanin@anu.edu.au,\\ thomas.thuem@uni-ulm.de}
}
\authorrunning{Tobias Runge et al.}

\maketitle

\begin{abstract}
We demonstrate that traits are a natural way to support correctness-by-construction (CbC) in an existing programming language in the presence of traditional post-hoc verification (PhV).
With Correct- ness-by-Construction, programs are constructed incrementally along with a specification that is inherently guaranteed to be satisfied. CbC is complex to use without specialized tool support, since it needs a set of refinement rules of fixed granularity which are additional rules on top of the programming language.

In this work, we propose TraitCbC, an incremental program construction procedure that implements correctness-by-construction on the basis of PhV by using traits. TraitCbC enables program construction by trait composition instead of refinement rules. It provides a programming guideline, which similar to CbC should lead to well-structured programs, and allows flexible reuse of verified program building blocks.
We introduce TraitCbC formally and prove the soundness of our verification strategy. Additionally, we implement TraitCbC as a proof of concept.


\end{abstract}

\section{Introduction}
\emph{Correctness-by-Construction} (CbC)~\cite{dijkstra-book,gries-book,kourie2012correctness,morgan-book} is a methodology that incrementally constructs correct programs guided by a pre-/postcondition specification.\footnote{The approach should not be confused with other CbC approaches such as CbyC of Hall and Chapman~\cite{hall_chapman_2002}. CbyC is a software development process that uses formal modeling techniques and analysis for various stages of development (architectural design, detailed design, code) to detect and eliminate defects as early as possible~\cite{Chapman:2006}. \hl{We also exclude data refinement from abstract data types to concrete ones during code generation as for example in Isabelle/HOL~\cite{haftmann2013data}}.}
CbC uses small tractable refinement rules where 
in each refinement step, an abstract statement (i.e., a hole in the program) is refined to a more concrete implementation that can still contain some nested abstract statements. While refining the program, the correctness of the whole program is guaranteed through the check of conditions in the refinement rules.
The construction ends when no abstract statement is left.
Through the structured reasoning discipline that is enforced by the refinement rules, it is claimed that
program quality increases and verification effort is reduced~\cite{kourie2012correctness,watson2016correctness}.

Despite these benefits, CbC has a drawback: the refinement rules extend the programming language (i.e., refinements are an additional linguistic construct to transform programs). Special tool support~\cite{runge2019tool} is necessary to introduce the CbC refinement process to a programming language.
Additionally, the predefined rules have a fine granularity such that for every new statement the programmer adds to the program, an application of a refinement rule is necessary. Consequently, the concepts of CbC (e.g., abstract statements and refinement rules) increase the effort and necessary knowledge of the developer to construct programs.

\emph{Post-hoc verification} (PhV) is another approach to develop correct programs. A method is verified against its pre- and postconditions after implementation.
In practice, it often happens that a program is constructed first, with the objective of verifying it later~\cite{watson2016correctness}.
This can lead to tedious verification work if the program is not well-structured.
An example is the difficult search for the many reasons preventing the verification of a method to be completed:
an incorrect specification, an incorrect method, or inadequate tool support. Therefore, a structured programming approach is desirable to construct programs which are amenable to software verification.

In this work, we use \emph{traits}~\cite{ducasse2006traits} to overcome the drawbacks of CbC (complex programming style using external refinement rules) and introduce a programming guideline for an incremental trait-based program construction approach that guarantees that the resulting trait-based program is correct-by-construction. TraitCbC is based on PhV. With TraitCbC, the same programs can be verified as with PhV, but in addition, TraitCbC introduces an explicit program construction approach. It utilizes the flexibility of traits, which is beneficial for scenarios as incremental development~\cite{damiani2014verifying} and the development of software product lines~\cite{CN02,bettini2010implementing}.

\emph{Traits}~\cite{ducasse2006traits} are a flexible object-oriented language construct supporting a rich form of modular code reuse orthogonal to inheritance.
A trait contains a set of \emph{concrete} or \emph{abstract} methods (i.e., the method has either a body or has no body), independent of any class or inheritance hierarchy.\footnote{The term trait has been used by many programming languages: Java interfaces with default methods are a good approximation for what has been called trait in the literature, while Scala traits are mixins~\cite{flatt1998classes}, and Rust traits are type classes~\cite{sozeau2008first}.}  
Traits are independent modules that can be composed into larger traits or classes. When traits are composed, the resulting code contains all methods of all composed traits.
To verify traits, Damiani et al.~\cite{damiani2014verifying} proposed a modular and incremental \emph{post-hoc verification} process. Each method in every trait is verified in isolation by showing that the method satisfies its \emph{contract}~\cite{meyer1988eiffel}. Then, during the composition of traits, it has to be checked whether a method implemented in one trait is compatible with the abstract method with the same signature in another trait. That means, a concrete method has to satisfy the specification of the abstract method.
A concrete method with a weaker precondition and a stronger postcondition fulfills the contract of the abstract method (cf. Liskov substitution principle~\cite{liskov}).

A developer using TraitCbC starts by implementing a method (e.g., a method \texttt{a}) in a first trait. Similar to CbC, the method can contain holes that are refined in subsequent steps. A hole in TraitCbC is an abstract method (e.g., an abstract method \texttt{b}) that is called in method \texttt{a}; that is, a call to an abstract method corresponds to an abstract statement in CbC.
In the next step, one of these new abstract methods (e.g., \texttt{b}) is implemented in a second trait, again more abstract methods can be declared for the implementation. Similar to PhV, it must be proven that the implemented methods satisfy their specifications. Afterwards, the traits are composed; the composition operation checks that the contract of the concrete method \texttt{b} in the second trait fulfills the contract of the abstract method \texttt{b} in the first trait. This incremental process stops when the last abstract method is implemented, and all traits are composed. 

The main result of our work is the discovery that traits intrinsically enable correctness-by-construction. This work is not about pushing verification forward in the sense of adding more expressive power.
TraitCbC realizes a refinement-based program development approach using pre-/postcondition contracts and method calls instead of refinement rules and abstract statements as in CbC.
Refinement rules in the form of trait composition exist as a direct concept of the programming language instead of being a program transformation concept.
Additionally, each method implemented in the refinement process can be reused by composing traits in different contexts (i.e., already proven methods can be called by new methods under construction). This is advantageous compared to the limited reuse potential of methods in class-based inheritance.
Finally, TraitCbC is parametric w.r.t. the specification logic.
Thus, a language with traits can adopt the proposed CbC methodology.




\section{Motivating Example}
\label{sec:example}
In this section, we go through an example of how our development process enables CbC using traits.


\paragraph{Incremental Construction of MaxElement}

We use a sample object-oriented language in the code examples.
We construct a method  \Q@maxElement@ that finds the maximum element in a list of numbers.
A list has a head and a tail.
Only non-empty lists have a maximum element. This is explicit in the precondition of our specification, where we require that the list has at least one element. In the postcondition, we specify that the result is in the list and larger than or equal to every other element. A method \Q@contains@ checks that the result is a member of the list.
In the first step, we create a trait \Q@MaxETrait1@ that defines the abstract method \Q@maxElement@. 
The method \Q@maxElement@ is abstract, i.e., equivalent to an abstract statement in CbC.

\begin{lstlisting}
trait MaxETrait1 {
  @Pre: list.size() > 0
  @Post: list.contains(result) &
    (forall Num n: list.contains(n) ==> result >= n)
  abstract Num maxElement(List list);
}
\end{lstlisting}

In the second step in trait \Q@MaxETrait2@, we implement the method \Q@maxElement@ using two abstract methods. We introduce an \Q@if-elseif-else@-expression where the branches invoke abstract methods. The guards check whether the list has only one element or whether the current element is larger than or equal to the maximum of the rest of the list. The abstract method \Q@accessHead@ returns the current element, and the abstract method \Q@maxTail@ returns the maximum in the remaining list. So, we recursively search the list for the largest element by comparing the maximum element of the list tail with the current element until we reach the end of the list.

\begin{lstlisting}
trait MaxETrait2 {
  @Pre: list.size() > 0
  @Post: list.contains(result) &
    (forall Num n: list.contains(n) ==> result >= n)
  Num maxElement(List list) =
    if (list.size() == 1) {accessHead(list)}
    elseif (accessHead(list) >= maxTail(list)) 
      {accessHead(list)}
    else {maxTail(list)}

  @Pre: list.size() > 0
  @Post: result == list.element()
  abstract Num accessHead(List list);

  @Pre: list.size() > 1
  @Post: list.tail().contains(result) &
    (forall Num n: list.tail().contains(n) ==> result >= n)
  abstract Num maxTail(List list);
}
\end{lstlisting}



The correct implementation of the method \Q@maxElement@ can be guaranteed under the assumptions that all introduced abstract methods are correctly implemented.
Similar to PhV, a program verifier conducts a proof of method \Q@maxElement@ and uses the introduced specifications of the methods \Q@accessHead@ and \Q@maxTail@. When the proof succeeds, we know that the first method is correctly implemented. In our incremental CbCTrait process, we verify each method implementation directly after construction; and so we are able to reuse each implemented method in the following steps (e.g., by calling the method in the body of other methods).


We now compose the developed traits to complete the first refinement step. To perform the composition \Q@MaxETrait1@ + \Q@MaxETrait2@, we check that the specification of the method \Q@maxElement@ fulfills the specification of the abstract method in the first trait (cf. Liskov substitution principle~\cite{liskov}). In this case, this means checking that:\\
\Q@MaxETrait1.maxElement(..).pre ==> MaxETrait2.maxElement(..).pre@
as well as: 
\Q@MaxETrait2.maxElement(..).post ==> MaxETrait1.maxElement(..).post@.\\
When the composition of two verified traits is successful, the result is also a verified trait.
Note that the composed trait does not need to be verified directly by a program verifier in TraitCbC because it is correct by construction.
In this example, the specifications are the same, thus checking for a successful composition is trivial, but this is not generally the case. 
In particular, the logic needs to take into account ill-founded specifications and recursion in the specification.
We discuss more about the difficulties of handling those cases in the Appendix~\ref{sec:circularity}.

The methods \Q@accessHead@ and \Q@maxTail@ are implemented in the next two refinement steps in traits \Q@MaxETrait3@ and \Q@MaxETrait4@\footnote{The methods could also be implemented in one trait.}. As we implement a recursive method, the method \Q@maxTail@ calls the \Q@maxElement@ method, thus \Q@maxElement@ is introduced as an abstract method in this trait. 
We have to verify that the method \Q@accessHead@ satisfies its specification using a program verifier.
Similarly, we have to verify the correctness of the method \Q@maxTail@.

\begin{lstlisting}
trait MaxETrait3 {
  @Pre: list.size() > 0
  @Post: result == list.element()
  Num accessHead(List list) = list.element()
}
\end{lstlisting}

\begin{lstlisting}
trait MaxETrait4 {
  @Pre: list.size() > 1
  @Post: list.tail().contains(result) &
    (forall Num n: list.tail().contains(n) ==> result >= n)
  Num maxTail(List list) = maxElement(list.tail())

  @Pre: list.size() > 0
  @Post: list.contains(result) &
    (forall Num n: list.contains(n) ==> result >= n)
  abstract Num maxElement(List list);
}
\end{lstlisting}

As before, all traits are composed, and it is checked that the specifications of the concrete methods fulfill the specifications of the abstract ones.
As we have no contradicting specifications for the same methods, the composition is well-formed. 
The final program \Q@MaxE@ is as follows. 

\begin{lstlisting}
class MaxE = MaxETrait1 + MaxETrait2 + MaxETrait3 + MaxETrait4 
\end{lstlisting}

%
%


\paragraph{Advantages of TraitCbC}
As shown in the example, TraitCbC enables the CbC programming style without the need of external refinement rules.
In classical CbC, when designing a unit of code, the programmer has to proceed with atomic steps of a predefined granularity.
In contrast, in TraitCbC the programmer is free to divide a unit of code in any granularity, by including
as many auxiliary methods as needed to bring the verification to an appropriate granularity. TraitCbC helps to construct code in fine-grained steps which are more amenable for verification than single more complex methods.
If the programmer chooses to not include any auxiliary methods at all, this is 
essentially the same as the traditional post-hoc verification style.
In the example above, we could implement the method \Q@maxElement@ in one step without the intermediate step that introduces the two abstract methods \Q@accessHead@ and \Q@maxTail@.

Additionally, the already proven auxiliary methods in traits can be reused. 
For example, if we want to implement a \Q@minElement@ method,
we could reuse already implemented traits to reduce the programming and verification effort.
The method \Q@minElement@ is implemented in the following in trait \Q@MinE@ with one abstract method. The specification of the method \Q@accessHead@ is the same as for the method \Q@accessHead@ above, so \Q@MaxETrait3@ can be reused.
In this example, we show the flexible granularity of TraitCbC by directly implementing the else branch, instead of introducing an auxiliary method as for \Q@maxElement@.

\begin{lstlisting}
trait MinE {
  @Pre: list.size() > 0
  @Post: list.contains(result) &
    (forall Num n: list.contains(n) ==> result <= n)
  Num minElement(List list) =
    if (list.size() == 1) {accessHead(list)}
    elseif (accessHead(list) <= minElement(list.tail()))
      {accessHead(list)}
    else {minElement(list.tail())}
  
  @Pre: list.size() > 0
  @Post: result == list.element()
  abstract Num accessHead(List list);
}
\end{lstlisting}

The correctness of \Q@minElement@ is verified with the specifications of the method \Q@accessHead@. 
By composing \Q@MinE@ with \Q@MaxETrait3@, we get a correct implementation of \Q@minElement@.
Note how this verification process supports abstraction:
as long as the contracts are compatible,
methods can be implemented in different styles by different programmers to best meet non-functional requirements while preserving the specified observable behavior~\cite{terBeek2018xbc}.
A completely different implementation of \Q@maxElement@ can be used if it fulfills the specification of the abstract method \Q@maxElement@ in trait \Q@MaxETrait1@. This decoupling of specification and corresponding satisfying implementations facilitates an incremental development process where a specified code base is extended with suitable implementations~\cite{damiani2014verifying}.

\section{Object-Oriented Trait-Based Language}
In this section, we formally introduce the syntax, type system, and flattening semantics of a minimal core calculus for TraitCbC.
We keep this calculus for TraitCbC parametric in the specification logic so that it can be used with a suitable program verifier and associated logic.
The presented rules to compose traits are conventional. The focus of our work is to enable a CbC approach using traits that programmers can easily adopt. Therefore, we present the calculus to prove soundness of TraitCbC, but focus on the presentation of the advantages of incremental trait-based programming in this paper. Indeed, languages with traits and with a suitable specification language intrinsically enable incremental program construction. For the sake of completeness, reduction rules of TraitCbC are presented in Appendix~\ref{sec:reduction}.

\subsection{Syntax}
The concrete syntax of our core calculus for TraitCbC is shown in Fig.~\ref{f:syntax}, where non-terminals ending with `s' are implicitly defined as a sequence of non-terminals, i.e., $vs ::= v_1 \dots v_n$. We use the metavariables $t$ for trait names, $C$ for class names and $m$ for method names.
A program consists of trait and class definitions. Each definition has a name and a trait expression $\mathit{E}$. The trait expression can be a $\mathit{Body}$, a trait name, a composition of two trait expressions $\mathit{E}$, or a trait expression $\mathit{E}$ where a method is made abstract, written as $E[\mathtt{makeAbstract}\ m]$. A $\mathit{Body}$ has a flag $\mathtt{interface}$ to define an interface, a set of implemented interfaces $\Cs$ and a list of methods $\Ms$. Methods have a method header $\mathit{MH}$ consisting of a specification $S$,  the return type, a method name, and a list of parameters. Methods have an optional method body. In the method body, we have standard expressions, such as variable references, method calls, and object initializations. For simplicity, we exclude updatable state. Field declarations are emulated by method declarations, and field accesses are emulated by method calls.

The specification $\mathit{S}$ in each method header is used to verify that methods are correctly implemented. The specification is written in some logic. In our examples, we will use first-order logic (cf. the example in Section~\ref{sec:example}).
A well-formed program respects the following conditions:

Every $\mathit{Name}$ in $\mathit{Ds}$ must be unique so that $\mathit{Ds}$ can be seen as a map
from names to trait expressions.
Trait expressions $E$ can refer to trait names $t$.
A well-formed $\Ds$ does not have any circular trait definitions like $t = t$
or $t_1 =t_2$ and $t_2 = t_1$.
In a $\Body$, all names of implemented interfaces must be unique and all method names must be unique, so that $\Body$ is a map
from method names to method definitions.
In a method header, parameters must have unique names, and no explicit parameter
can be called $\mathtt{this}$.

%
%
%
%
 
\begin{figure}[t]
	\centering
	\setlength{\fboxsep}{3pt}
	\setlength{\fboxrule}{1pt}
	\fbox{
		\begin{minipage}{0.97\linewidth}
			\begin{array}[t]{lcl}
				\mathit{Prog} &::= &\Ds\ e\\
				D &::= &\TD\ |\ \CD\\
				\Name &::= &t\ |\ C\\
				\TD &::=  &t = E\\
				\CD &::= &C = E\\
				E &::= &\Body\ |\ t\ |\ E + E\ |\ E[\mathtt{makeAbstract}\ m]\\
				\Body &::= &\{\mathtt{interface}?\ [\Cs]\ \Ms \}\\
				M &::= &\MH\ e?;\\
				\MH &::= &S\ \methodKW\ C\ m(C_1\ x_1 \dots C_n\ x_n)\\
				e &::= &x\ |\ e.m(\es)\ |\ \newKW\ C(\es)\\
				\E_v &::= &[].m(es)  \ |\ v.m(vs\ []\ es)\ |\ \newKW\ C(\vs\ []\ \es)\\
				v &::= &\newKW\ C(\vs)\\
				\Gamma &::= &x_1:C_1 \dots x_n:C_n\\
				S &::= & \dots e.g.\ \mathtt{Pre}:P\ \mathtt{Post}:P\\
				P &::= & \dots e.g.\ \text{First order logic}
			\end{array}
	\end{minipage}}
	\caption{Syntax of the trait system}
	\label{f:syntax}
\end{figure}


\subsection{Typing Rules}

In our type system, we have a typing context $\Gamma ::= x_1:C_1\dots x_n:C_n$ which assigns types $C_i$ to variables $x_i$.
We define typing rules for our three kinds of expressions: $x$, method calls, and object initialization. We combine typing and verification in our type checking $	\Gamma \vdash e : C \dashv P_0 \models P_1$. This judgment can be read as: under typing context $\Gamma$, the expression $e$ has type $C$, where under the knowledge $P_0$ we need to prove $P_1$. The knowledge $P_0$ is our collected information that we use to prove a method correct.
That means, in our typing rules, we collect the knowledge about the parameters and expressions in a method body to verify that this method body fulfills the specification defined in the method header. The verification obligation $P_1$ should follow from the knowledge $P_0$.

We check if methods are well-typed with judgments of form
$\Ds; \Name \vdash M : \OK$. This judgment can be read as: in the definition table, the method $M$ defined under the definition \Name\ is correct.
The typing rules of Fig.~\ref{f:typing} are explained in Appendix~\ref{sec:typing} in detail. 
The first four rules type different expressions and collect the information of these expressions to prove with rule \textsc{MOK} that a method fulfills its specification. In the rule \textsc{MOK} with keyword \textbf{verify}, we call a verifier to prove each method once. Abstract methods (\textsc{AbsOK}) are always correct. Rule \textsc{BodyOK} ensures that all methods in a body are correctly typed.

\begin{figure}[t]
	\centering
	\begin{scriptsize}
		\begin{mathpar}
			\inferrule
			{
			}{
				\Gamma \vdash x: \Gamma(x) \dashv \result: \Gamma(x)\ \&\ \result = x \models \mathit{true}}
			\quad (\textsc{x})
			
			\inferrule
			{
				S\ \methodKW\ C\ m (C_1\ x_1 \dots C_n\ x_n) \_;\ \in\ \mathit{methods(C_0)}\\
				\Gamma \vdash e_0 : C_0 \dashv P_0 \models P'_0\
				\dots\
				\Gamma \vdash e_n : C_n \dashv P_n \models P'_n\\
				x'_0 \dots x'_n\ \mathit{fresh}\\
				S' = S[\this:=x'_0,\ x_1 := x'_1,\ \dots,\ x_n:= x'_n]\\
				P = (\result:C\ \&\ P_0[\result:=x'_0]\ \&\ \dots\ \&\ P_n[\result:=x'_n]\\\\ \&\ (\Pre(S') \implies \Post(S')))
			}{
				\Gamma \vdash e_0.m(e_1\dots e_n) : C \dashv P \models P'_0\ \&\ \dots\ \&\ P'_n\ \&\ \Pre(S')}
			\quad (\textsc{Method})
			
			\inferrule
			{
				\Gamma \vdash e_1 : C_1 \dashv P_1 \models P'_1\
				\dots\
				\Gamma \vdash e_n : C_n \dashv P_n \models P'_n\\
				\mathit{getters}(C) = S_1\ \methodKW\ C_1\ x_1();\ \dots\ S_n\ \methodKW\ C_n\ x_n();\\
				x'_1 \dots x'_n\ \mathit{fresh}\\
				S'_i = S_i[\this:=\result]\\
				P''_i = (P_i[\result:=x'_i]\ \&\ (\Pre(S'_i) \implies \result.x_i()=x'_i))\\
				P = (\result:C\ \&\ P''_1\ \&\ \dots\ \&\ P''_n)
			}{
				\Gamma \vdash \newKW\ C(e_1 \dots e_n) : C \dashv P \models P'_1\ \&\ \dots\ \&\ P'_n\ \&\ \Pre(S'_1)\ \&\ \dots \&\ \Pre(S'_n)}
			\quad (\textsc{New})
			
			\inferrule
			{
				\Gamma \vdash e : C' \dashv P \models P'\\
				C'\ instanceof\ C
			}{
				\Gamma \vdash e: C \dashv P \models P'}
			\quad (\textsc{sub})
			
			\inferrule
			{
				\Gamma = \this:\Name,\ x_1:C_1,\ \dots,\ x_n:C_n\\
				\Gamma \vdash e : C \dashv P \models P'\\\\
				\textbf{verify}\ Ds \vdash (\Gamma\ \&\ \Pre(S)\ \&\ P) \models (P'\ \&\ \Post(S))
			}{
				\Ds;\ \Name \vdash S\ \methodKW\ C\ m (C_1\ x_1 \dots C_n\ x_n)\ e;\ : \OK}
			\quad (\textsc{MOK})
			
			\inferrule
			{
			}{
				\Ds;\ \Name \vdash S\ \methodKW\ C\ m(C_1\ x_1 \dots C_n\ x_n);\ : \OK}
			\quad (\textsc{AbsMOK})
			
			\inferrule
			{
				\Body = \{\interface?\ [\Cs]\ M_1 \dots M_n\}\\\\
				\Ds; \Name \vdash M_1 : \OK \dots \Ds; \Name \vdash M_n : \OK
			}{
				\Ds; \Name \vdash \Body\ : \OK}
			\quad (\textsc{BodyTyped})	
		\end{mathpar}
	\end{scriptsize}
	\caption{Expression typing rules}
	\label{f:typing}
\end{figure}

\subsection{Flattening Semantics}
When we implement methods in several traits, we have to check that these traits are compatible when they are composed. This process to derive a complete class from a set of traits is called flattening. We follow the traditional flattening semantics~\cite{ducasse2006traits}.
A class that is defined by composing several traits is obtained by flattening rules. All methods are direct members of the class~\cite{ducasse2006traits}. Overall, our flattening process works as a big step reduction arrow, where we reduce a trait expression into a well-typed and verified body.




To introduce our flattening rules in Fig~\ref{f:flattening}, we first define the helper functions.
The function $\mathit{allMeth}$ collects all method headers with the same name as $m$ in all input bodies (Definition~\ref{def:allmeth}).
When two $\mathit{Body}$s are composed (Definition~\ref{def:bodyplus}), the implemented interfaces are united and the methods are composed. 
The composition of methods (Definition~\ref{def:msplus}) collects methods that are only defined in one of the input sets. If a method is in both sets, it is composed (Definition~\ref{def:mplus}). Here, we distinguish four cases. If one method is abstract and the other is concrete, we have to show that the precondition of the abstract method implies the precondition of the concrete method. Additionally, the postcondition of the concrete one has to imply the postcondition of the abstract one. This is similar to Liskov's substitution principle~\cite{liskov}.
The second case is the symmetric variant of the first case. In the third and fourth case, two abstract methods are composed. Here, the specification of one abstract method has to imply the specification of the other abstract method such that an implementation can still satisfy all specifications of abstract methods. If both method are concrete, the composition is correctly left undefined. This composition error can be resolved by making one method $m$ abstract in the $\Body$, as defined in Definition~\ref{def:bodyabs}. The resulting $\Body$ is similar with the difference that the implementation of the method $m$ is omitted.
The flattening rules in Fig.~\ref{f:flattening} are explained in the following in detail. In these rules, a set of traits is flattened to a declaration containing all methods. If abstract and concrete methods with the same name are composed, Definitions \ref{def:bodyplus}-\ref{def:mplus} are used to guarantee correctness of the composition.

\begin{definition}[All Methods]\label{def:allmeth}
	$\mathit{allMeth}(m,\ \mathit{Bodys}) =$\\
	${}_{}\quad \{\MH;\ |\ \Body\ \in\ \mathit{Bodys},\ \Body(m)=\MH;\}$
\end{definition}
	\begin{definition}[Body Composition]\label{def:bodyplus} $\Body_1 + \Body_2 = \Body$\\
		${}_{} \{ \interface?\ [\Cs_1]\ \Ms_1 \} + \{ \interface?\ [\Cs_1]\ \Ms_1 \} =$\\ 
		${}_{}\{ \interface?\ [\Cs_1\ \cup \Cs_2]\ \Ms_1+\Ms_2 \}$
	\end{definition}
	\begin{definition}[Methods Composition]\label{def:msplus} $\Ms_1+\Ms_2=\Ms$\\
		${}_{}\bullet (M\ \Ms_1) + \Ms_2 = M\ (\Ms_1+\Ms_2)$\\
		${}_{}\quad  \mathit{if\ methName(M)}\notin\mathit{dom (\Ms_2)}$\\
		${}_{}\bullet (M_1\ \Ms_1) + (M_2\ \Ms_2) = M_1+M_2\ (\Ms_1+\Ms_2)$\\
		${}_{}\quad \mathit{if\ methName(M_1)= methName(M_2)}$\\
		${}_{}\bullet \emptyset + \Ms = \Ms$
	\end{definition}
	\begin{definition}[Method Composition]\label{def:mplus} $M_1+M_2=M $\\
		${}_{}\bullet S\ \methodKW\ C\ m(C_1\ x_1 \dots C_n\ x_n)\ e;\ +\ S'\ \methodKW\ C\ m(C_1\ \_\dots C_n\ \_);$\\
		${}_{}\quad =\ S\ \methodKW\ C\ m(C_1\ x_1 \dots C_n\ x_n)\ e;$\\
		${}_{}\quad \mathit{if}\  \Pre(S')\ \mathit{implies}\ \Pre(S)\ and\ \Post(S)\ \mathit{implies}\ \Post(S')$\\
		${}_{}\bullet \MH_1;\ +\ \MH_2\ e;\quad  =\quad  \MH_2\ e;\ +\ \MH_1;$\\
		${}_{}\bullet S\ \methodKW\ C\ m(C_1\ x_1\dots C_n\ x_n);\ +\ S'\ \methodKW\ C\ m(C_1\ \_\dots C_n\ \_);$\\
 		${}_{}\quad =\ S\ \methodKW\ C\ m(C_1\ x_1\dots C_n\ x_n);$\\
		${}_{}\quad \mathit{if}\  \Pre(S')\ \mathit{implies}\ \Pre(S)\ \mathit{and\ Post(S)\ implies\ Post(S')}$\\
		${}_{}\bullet S\ \methodKW\ C\ m(C_1\ x_1\dots C_n\ x_n);\ +\ S'\ \methodKW\ C\ m(C_1\ \_\dots C_n\ \_);$\\
		${}_{}\quad =\ S'\ \methodKW\ C\ m(C_1\ x_1\dots C_n\ x_n);$\\
		${}_{}\quad \mathit{if\  (Pre(S)\ implies\ Pre(S')\ and\ Post(S')\ implies\ Post(S))}$\\
		${}_{}\quad \mathit{and\ not\  (Pre(S')\ implies\ Pre(S)\ and\ Post(S)\ implies \Post(S'))}$
	\end{definition}
	\begin{definition}[Body Abstraction]\label{def:bodyabs} $\Body[\mathtt{makeAbstract}\ m]$\\
	${}_{}\quad \{ [\Cs]\ \Ms_1\ S\ \mathtt{method}\ C\ m(\mathit{Cxs}) \_;\  \Ms_2\}[\mathtt{makeAbstract}\ m] $\\ 
	${}_{}\quad = \{ [\Cs]\ \Ms_1\ S\ \mathtt{method}\ C\ m(\mathit{Cxs});\  \Ms_2\}$
	\end{definition}

\begin{figure}[!t]
	\centering
	\begin{scriptsize}
	\begin{mathpar}
		\inferrule
		{
			D'_1 \dots D'_n \vdash D_1 \Downarrow D'_1\\ \dots\\ D'_1 \dots D'_n \vdash D_n \Downarrow D'_n
		}{
			D_1 \dots D_n \Downarrow D'_1 \dots D'_n}
		\quad (\textsc{FlatTop})
		
		\inferrule
		{
			\Ds;\ \Name \vdash E \Downarrow \Body\\
			\mathtt{if}\ \Name\ \mathtt{of\ form}\ C\ \mathtt{then}\ \mathit{abs(Body)} = S\ T\ x_1(); \dots S\ T\ x_n();
		}{
			\Ds \vdash \Name = E \Downarrow \Name = \Body}
		\quad (\textsc{DFlat})
		
		\inferrule
		{
			\Body = \{\interface?\ [\Cs]\ M_1 \dots M_n\}\\\\
			\Body' = \{\interface?\ [\Cs]\ M_1 \dots M_n\ \Ms\}\\
			\Ms = \{\Sigma \mathit{allMeth}(\Ds,\ \Cs,\ m)\ |\  m \in \mathit{dom(Cs)}\ \mathit{and}\ m \notin \mathit{dom(Body)} \}\\
			\Ds;\ \Name \vdash \Body' : \OK
		}{
			\Ds;\ \Name \vdash \Body \Downarrow \Body'}
		 \quad (\textsc{BFlat})
		
		\inferrule
		{
		}{
			\Ds;\ \Name \vdash t \Downarrow \Ds(t)}
		\quad (\textsc{tFlat})
		
		\inferrule
		{
			\Ds;\ \Name \vdash E_1 \Downarrow \Body_1\\
			\Ds;\ \Name \vdash E_2 \Downarrow \Body_2\\
		}{
			\Ds;\ \Name \vdash E_1 + E_2 \Downarrow \Body_1 + \Body_2}
		\quad (\textsc{+Flat})
		
		\inferrule
		{
			\Ds;\ \Name \vdash E \Downarrow \Body\\
			\Body = \{[\Cs]\ \overline{M}_1\ S\ \mathtt{method}\ C\ m(C_1\ x_1 \dots C_n\ x_n) \_;\ \overline{M}_2\}\\
			\Body' = \{[\Cs]\ \overline{M}_1\ S\ \mathtt{method}\ C\ m(C_1\ x_1 \dots C_n\ x_n);\ \overline{M}_2\}
		}{
			\Ds;\ \Name \vdash E[\mathtt{makeAbstract}\ m] \Downarrow \Body'}
		\quad (\textsc{AbsFlat})
	\end{mathpar}
	\end{scriptsize}
	\caption{Flattening rules}
	\label{f:flattening}
\end{figure}

	\textsc{FlatTop}. The first rule flattens a set of declarations $D_1\dots D_n$ to a set $D'_1 \dots D'_n$.	We express this rule in a non-computational way:
	we assume to know the resulting $D'_1 \dots D'_n$, and we use them as a guide to
	compute them.
	Note that if there is a resulting $D'_1 \dots D'_n$ then it is unique;
	flattening is a deterministic process and $D'_1 \dots D'_n$ are used only
	to type check the results. They are not used to compute the shape of
	the flattened code.
	
	Non computational rules like this are common with nominal type
	systems~\cite{igarashi2001featherweight} where the type signatures of all classes and methods can
	be extracted before the method bodies are verified.
	
	\textsc{DFlat}. This rule flattens an individual definition by flattening the trait expression.
	When the flattening produces a class definition, we also check that the body
	denotes an instantiable class; a class whose only abstract methods are
	valid getters. The function $\mathit{abs(Body)}$ returns the abstract methods.
	
	\textsc{BFlat} It may look surprising that the $\Body$ does not flatten to itself. This represents what happens in most programming languages, where
	implementing an interface implicitly imports the abstract signature
	for all the methods of that interface.
	In the context of verification also the specification of such interface methods is imported.
	In concrete, $\Body'$ is like $\Body$, but we add $\Ms$ by collecting all the methods
	of the interfaces that are not already present in the $\Body$.
	
	Moreover, we check that all the methods defined in the class respect
	the typing and the specification defined in the interfaces:
	if a class has $S\ $\Q@method Foo@ \Q@foo();@ or 
	$S\ $\Q@method Foo foo() e;@
	and there is a $S'\ $\Q@method Foo foo();@ in the interface,
	then $S$ must respect the specification $S'$.
	The system then checks that the $\Body$ is well-typed and verified by calling
	$\Ds;\ \Name \vdash M_i : \OK$
	
	
	\textsc{TFlat}. A trait $t$ is flattened to its declaration $\Ds(t)$.
	
	\textsc{+Flat}. The composition of two expression $E_1$ and $E_2$, where both expressions are first reduced to $\Body_1$ and $\Body_2$, results in the composition of these bodies as defined in Definition~\ref{def:bodyplus}.
	
	\textsc{AbsFlat}. An expression $E$ where one method $m$ is made abstract flattens to a $\Body'$. We know that $E$ flattens to $\Body$. The only difference between $\Body$ and $\Body'$ is that the one method $m$ is abstract in $\Body'$. In $\Body$, the method can be abstract or concrete.

\subsection{Soundness of the Trait-based CbC Process}

In this section, we formulate our main result of the TraitCbC process. We prove soundness of the flattening process with a parametric logic. The proofs of the lemmas and theorems are in Appendix~\ref{sec:proofs}.
We claim that if you have a language without code reuse and with sound and modular PhV verification
then the language supports CbC simply by adding traits to the language. That is, traits intrinsically enable a CbC program construction process.


To prove soundness of the refinement process of TraitCbC (Theorem~\ref{theorem:cbcsound}: Sound CbC Process) as exemplified in Section~\ref{sec:example}, we have to show that the flattening process is correct (Theorem~\ref{theorem:soundness}: General Soundness). 
In turn, to prove General Soundness, we need two lemmas which state that the composition of traits is correct (Lemma~\ref{lemma:flattening}) and that a trait after the $\mathtt{makeAbstract}$ operation is still correct (Lemma~\ref{lemma:makeAbstract}).

In Lemma~\ref{lemma:flattening}, we have well-typed definitions $\Ds$, and two well-typed and verified traits in $\Ds$, and the resulting trait/class is also well-typed and verified.

\begin{lemma}[Composition correct]\label{lemma:flattening}\quad\\
If $\Ds(t1) = \Body_1$,
$\Ds(t2) = \Body_2$, 
$\Ds(\Name) = \Body$,
$\Ds; t_1 \vdash \Body_1 : \OK$,
${}_{}\quad$ $\Ds; t_2 \vdash \Body_2 : \OK$,
and $\Body_1 + \Body_2=\Body$,\\*
then $\Ds; \Name \vdash \Body : \OK$
\end{lemma}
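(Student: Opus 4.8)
The plan is to unfold the goal $\Ds; \Name \vdash \Body : \OK$ with rule \textsc{BodyTyped}, which reduces it to proving $\Ds; \Name \vdash M : \OK$ for every method $M$ of $\Body = \Body_1 + \Body_2$. By Definitions~\ref{def:bodyplus}--\ref{def:mplus} each such $M$ either occurs in only one of $\Body_1,\Body_2$ and is copied verbatim, or is a composition $M_1 + M_2$ of two like-named methods. A direct case analysis on Definition~\ref{def:mplus} shows that \emph{every concrete method of $\Body$ keeps exactly the body and specification it had in its originating body}: the abstract/concrete cases return the concrete method unchanged, and the abstract/abstract cases return an abstract method. Hence the abstract methods of $\Body$ are immediately $\OK$ by \textsc{AbsMOK}, and it remains, for each concrete $M$ coming from some $\Body_i$, to upgrade the already available derivation $\Ds; t_i \vdash M : \OK$ to $\Ds; \Name \vdash M : \OK$.

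First I would pin down precisely what differs between these two judgments. The body and the method's own specification $S$ are identical, so in rule \textsc{MOK} the conjuncts $\Pre(S)$ and $\Post(S)$ are unchanged; the only change is that $\this$ is retyped from $t_i$ to $\Name$. Because every non-$\this$ receiver has a type fixed by $\Ds$ and by the unchanged parameter bindings $x_j : C_j$, this retyping affects exactly two things in the derivation of the body: the specification that rule \textsc{Method} reads for self-calls $\this.m(\dots)$ (looked up in $\Ds(t_i)=\Body_i$ versus $\Ds(\Name)=\Body$), and the logical fact $\result : \Gamma(\this)$ produced by rule \textsc{x} when $\this$ is used as a value. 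The next step is a \emph{domination lemma}: for every $m \in \mathit{dom}(\Body_i)$ the specification of $m$ in $\Body$ has a weaker precondition and a stronger postcondition than in $\Body_i$. This is read off Definition~\ref{def:mplus}: when the kept specification is inherited from $\Body_i$ it is literally unchanged, and when it comes from the other body the recorded side conditions ``$\Pre(S')$ implies $\Pre(S)$ and $\Post(S)$ implies $\Post(S')$'' state exactly that the kept specification dominates the discarded one.

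The core is then a monotonicity property of the typing-and-verification judgment, which I would prove by induction on the derivation of $\Gamma \vdash e : C \dashv P \models P'$: if the self-type is replaced by a dominating type (a body with a superset of methods whose shared specifications are Liskov-stronger), the derivation keeps its shape---types, freshness conditions, and the method signatures checked in \textsc{Method} are untouched---while the collected knowledge only grows and the proof obligation only shrinks. Concretely, in \textsc{Method} the affected conjuncts are the assumption $(\Pre(S') \implies \Post(S'))$ added to $P$ and the obligation $\Pre(S')$ added to $P'$; a dominating specification weakens the obligation, and it strengthens the assumption because $(A' \implies B')$ entails $(A \implies B)$ whenever $A \implies A'$ and $B' \implies B$. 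Feeding the resulting ``more knowledge, fewer obligations'' into the \textbf{verify} premise of \textsc{MOK}, whose endpoints $\Pre(S)$ and $\Post(S)$ for $M$ are fixed, lets the original verification condition discharge the new one.

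I expect the monotonicity lemma, and inside it the treatment of the self-type, to be the main obstacle. The delicate point is the second effect isolated above: I must argue that the logical assumption contributed by $\this$ (via \textsc{x}, and propagated whenever $\this$ is passed as an argument) is not weakened when its type grows from $t_i$ to $\Name$, i.e.\ that $\Name$ behaves as a behavioural subtype of $t_i$. This is exactly what the domination lemma buys---$\Body$ offers all self-methods of $\Body_i$ with dominating contracts---but making it precise requires the specification logic to be monotone in this sense, which is where the parametricity of TraitCbC in the logic has to be used carefully. A minor accompanying check is that no self-call in $M$ can name a method outside $\mathit{dom}(\Body_i)$, which holds because $M$ already type-checked under $t_i$, so the domination lemma indeed applies to every affected specification.
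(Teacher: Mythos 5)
Your proof is correct and follows essentially the same route as the paper's: reduce to per-method typing via \textsc{BodyTyped}, dispatch abstract methods with \textsc{AbsMOK}, observe that a concrete method carries its body and specification over unchanged so that the only difference between the two \textsc{MOK} derivations lies in the (Liskov-stronger) contracts of methods resolved through $\this$, and close by transitivity of the parametric implication. The paper phrases this as a proof by contradiction and leaves your ``domination'' and monotonicity steps implicit; your explicit induction on the typing derivation, and in particular your flagging of the $\result:\Gamma(\this)$ fact produced by rule \textsc{x}, is a refinement in detail rather than a different argument.
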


Lemma~\ref{lemma:makeAbstract} shows that if we have a well-typed and verified trait, the
operation $\mathtt{makeAbstract}$ results in a trait/class that is also well-typed and verified.

\begin{lemma}[MakeAbstract correct]\label{lemma:makeAbstract}\quad\\
If $\Ds(t) = \Body$,
$\Ds(\Name) = \Body'$,
$\Ds; t \vdash \Body : \OK$,\\* 
${}_{}\quad$ and $\Body[\mathtt{makeAbstract}\ m] = \Body'$,\\*
then $\Ds; \Name$ $\vdash \Body' : \OK$
\end{lemma}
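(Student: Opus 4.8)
The plan is to prove the lemma by structural inversion: I invert the body-typing judgment for $\Body$, then re-derive the corresponding judgment for $\Body'$ method by method, exploiting that $\mathtt{makeAbstract}$ only deletes a single method body while leaving every signature and every other method untouched. Concretely, write $\Body = \{[\Cs]\ \Ms_1\ S\ \methodKW\ C\ m(\mathit{Cxs})\ e;\ \Ms_2\}$; by Definition~\ref{def:bodyabs} the hypothesis $\Body[\mathtt{makeAbstract}\ m] = \Body'$ forces $\Body' = \{[\Cs]\ \Ms_1\ S\ \methodKW\ C\ m(\mathit{Cxs});\ \Ms_2\}$, so the two bodies share the same interface list $\Cs$, the same method sequences $\Ms_1$ and $\Ms_2$, and the same header for $m$, differing only in that the concrete $m$ of $\Body$ has become the abstract $m$ of $\Body'$.

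First I would invert $\Ds; t \vdash \Body : \OK$. Since \textsc{BodyTyped} is the only rule with a body in its conclusion, this yields $\Ds; t \vdash M : \OK$ for every method $M$ of $\Body$. By the same rule, to prove the goal $\Ds; \Name \vdash \Body' : \OK$ it suffices to establish $\Ds; \Name \vdash M' : \OK$ for every method $M'$ of $\Body'$, so I split on whether $M'$ is the modified method. For $m$, now abstract, the judgment $\Ds; \Name \vdash S\ \methodKW\ C\ m(\mathit{Cxs}); : \OK$ holds immediately by \textsc{AbsMOK}, which has no premises; this is precisely where the verification obligation attached to the old body $e$ is discarded. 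Every other method $M'$ is syntactically identical to its counterpart in $\Body$: if it is abstract it is again handled by \textsc{AbsMOK} for free, so the only real work is transporting the concrete-method derivations from $\Ds; t$ to $\Ds; \Name$ through \textsc{MOK}.

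The main obstacle is this transport, because \textsc{MOK} places the defining name into the typing context as the self-type: the premises for a concrete $M'$ use $\Gamma = \this:t,\ x_1:C_1, \dots$ under $\Ds; t$ but $\Gamma = \this:\Name,\ x_1:C_1, \dots$ under $\Ds; \Name$, and this self-type also flows into the verification premise $\textbf{verify}\ \Ds \vdash (\Gamma\ \&\ \Pre(S)\ \&\ P) \models (P'\ \&\ \Post(S))$. The observation that closes the gap is that $\mathtt{makeAbstract}$ preserves signatures, so the entries $\Ds(t) = \Body$ and $\Ds(\Name) = \Body'$ record identical method signatures; hence every method resolution on the receiver $\this$ returns the same result for the two self-types, the expression-typing derivation of a body has the same shape under $\this:t$ and $\this:\Name$, and the collected knowledge differs only by the consistent substitution of the self-type (for instance in the clause $\result:\Gamma(\this)$ produced by rule \textsc{x}). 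I would then discharge the verification premise by appealing to the fact that $t$ and $\Name$ have the same interface, so the hypotheses $\this:t$ and $\this:\Name$ license exactly the same logical consequences: validity of the entailment under the self-type $t$ therefore transfers to validity under $\Name$. This stability of the $\textbf{verify}$ judgment under a uniform renaming of an interface-equivalent self-type is the single non-structural step, and it is supplied by the renaming/weakening interface we assume of the parametric specification logic; everything else is a routine replay of the typing rules.
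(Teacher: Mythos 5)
Your proposal is correct and follows essentially the same route as the paper's own proof: both reduce the claim to the observations that the abstracted method $m$ is discharged by \textsc{AbsMOK}, that every other method is syntactically unchanged, and that the only residual issue is the change of self-type in $\Gamma$ from $\this:t$ to $\this:\Name$, which is harmless because $\Body$ and $\Body'$ carry identical specifications. The paper phrases this as a proof by contradiction while you argue directly (and you correctly cite Definition~\ref{def:bodyabs} where the paper's text mistakenly points to Definition~\ref{def:bodyplus}), but the mathematical content is the same.
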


With these Lemmas, we can prove Theorem~\ref{theorem:soundness}. Given a sound and modular verification language, then all programs that flatten are well-typed and verified.
In a modular verification language, a method can be fully verified using only the information contained in the method declaration and the specification of any used method.
Moreover, our parametric logic must support at least a commutative and associative \textit{and} (but of course other ways to merge knowledge could work too) and a transitive \textit{implication} (but of course other forms of logical consequence could work too).

\begin{theorem}[General Soundness]\label{theorem:soundness}\quad\\
For all programs $\Ds$ where $\Ds$ flattens to $\Ds'$, and $\Ds'$ is well-typed;\\*
that is, forall $\Name=\Body \in \Ds'$,
we have $\Ds';\ \Name \vdash \Body : \OK$.
\end{theorem}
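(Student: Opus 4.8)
The plan is to prove General Soundness by structural induction on the flattening derivation, using the two lemmas as the workhorses for the two interesting syntactic forms of trait expressions. The theorem statement concerns a whole program $\Ds$ that flattens to $\Ds'$, so I would first unwind the top-level rule \textsc{FlatTop}: it reduces $D_1 \dots D_n$ to $D'_1 \dots D'_n$ where each $D_i \Downarrow D'_i$ under the already-flattened table $D'_1 \dots D'_n$. Thus it suffices to show, for each individual declaration, that if $\Ds' \vdash \Name = E \Downarrow \Name = \Body$ then $\Ds';\ \Name \vdash \Body : \OK$. This reduces the theorem to a statement about a single \textsc{DFlat} step, whose premise is $\Ds';\ \Name \vdash E \Downarrow \Body$. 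Here I rely on the non-computational reading of \textsc{FlatTop} emphasized in the text: the resulting $\Ds'$ is already available as the definition table against which every sub-derivation type-checks, so there is no circularity to untangle when a flattened body refers to another flattened definition.

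Second, I would prove the key auxiliary claim by induction on the derivation of $\Ds';\ \Name \vdash E \Downarrow \Body$, with one case per flattening rule. The base cases are \textsc{BFlat} and \textsc{tFlat}: in \textsc{BFlat} the conclusion $\Ds';\ \Name \vdash \Body' : \OK$ is literally a premise of the rule, and in \textsc{tFlat} the body is $\Ds'(t)$, which is well-typed and verified by the hypothesis that $\Ds'$ is the flattened (hence checked) table. The two inductive cases are \textsc{+Flat} and \textsc{AbsFlat}. For \textsc{+Flat}, the induction hypothesis gives $\Ds';\ \Name \vdash \Body_1 : \OK$ and $\Ds';\ \Name \vdash \Body_2 : \OK$, and Lemma~\ref{lemma:flattening} (Composition correct) delivers $\Ds';\ \Name \vdash \Body_1 + \Body_2 : \OK$ — modulo aligning the two trait names $t_1, t_2$ in the lemma with the single ambient $\Name$, which I would handle by observing that \textsc{MOK} checks each method against $\Ds'$ uniformly and the name annotation only records the enclosing definition. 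For \textsc{AbsFlat}, the induction hypothesis supplies $\Ds';\ \Name \vdash \Body : \OK$ and Lemma~\ref{lemma:makeAbstract} (MakeAbstract correct) yields well-typedness of the body with method $m$ made abstract.

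The main obstacle I anticipate is the interaction between the non-computational \textsc{FlatTop} rule and modular verification when definitions are mutually recursive (as in the \texttt{maxElement}/\texttt{maxTail} example). Because \textsc{MOK} invokes the verifier with \textbf{verify}, I must argue that every call $\Ds';\ \Name \vdash M_i : \OK$ issued during the induction only consults method \emph{signatures and specifications} from $\Ds'$, never the as-yet-unestablished $\OK$ judgments of the other definitions. This is exactly where the modularity hypothesis is needed: modular verification lets me treat each called method purely through its declared contract, so the circular-looking dependency in \textsc{FlatTop} is benign — the specifications are extractable up front, and soundness of the verifier closes each method obligation locally. I would state this dependency explicitly and lean on the assumed transitivity of implication and commutativity/associativity of the knowledge conjunction to justify that the composed specifications checked in Definition~\ref{def:mplus} propagate correctly through the composition step; these algebraic properties of the parametric logic are precisely what make the proofs of the two lemmas, and hence this theorem, go through.
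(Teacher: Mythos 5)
Your proposal is correct and follows essentially the same route as the paper's own proof: an induction over the flattening derivation with one case per rule, using the \textsc{BFlat} premise for the body case, the already-verified $\Ds'$ for the trait-name case, Lemma~\ref{lemma:flattening} for \textsc{+Flat}, and Lemma~\ref{lemma:makeAbstract} for \textsc{AbsFlat}. Your explicit treatment of the non-computational \textsc{FlatTop} rule and of why modular verification defuses the apparent circularity is a more careful spelling-out of what the paper leaves implicit, not a different argument.
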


We now show that the TraitCbC process is sound. Theorem~\ref{theorem:cbcsound} states that starting with one abstract method and a set of verified traits, the composed program is also verified.

\begin{theorem}[Sound CbC Process]\label{theorem:cbcsound}\quad\\
	Starting from a fully abstract specification $t_0$, and some refinement steps $t_1 \dots t_n$, we can write $C = t_0 + \dots + t_n$ as our whole CbC refinement process; where $t_0 + t_1$ is the application of the first refinement step. 
	If we use CbC to construct programs, we can start from verified atomic units and get a verified result. Formally, if $t_0 = \{\MH\}\
	t_1 = \{\Ms_1\}\
	\dots\
	t_n = \{\Ms_n\}$ are well-typed, and\\
	$\begin{array}{lcl}
		t_0 = \{\MH\}&&t_0 = \{\MH\}\\
		t_1 = \{\Ms_1\}\
		\dots\
		t_n = \{\Ms_n\} &\quad\Downarrow \quad\quad & t_1 = \{\Ms_1\}\
		\dots\
		t_n = \{\Ms_n\}\\
		C = t_0 + \dots + t_n && C = \Body\\
	\end{array}$\\
	then $C=\Body$ is well-typed.
\end{theorem}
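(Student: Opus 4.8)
The plan is to derive Theorem~\ref{theorem:cbcsound} as a specialization of the already-established General Soundness result (Theorem~\ref{theorem:soundness}): the CbC construction process produces exactly a program of the shape to which General Soundness applies, so the only real work is to check that the hypotheses of that theorem are met and then to read off its conclusion for the class definition $C$.

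First I would assemble the declaration table $\Ds = (t_0 = \{\MH\},\ t_1 = \{\Ms_1\},\ \dots,\ t_n = \{\Ms_n\},\ C = t_0 + \dots + t_n)$ and confirm that it is a well-formed program: the names $t_0,\dots,t_n,C$ are pairwise distinct, there are no circular trait definitions (each $t_i$ is a literal body and $C$ refers only to the $t_i$), and each $t_i$ is well-typed by hypothesis. In particular, the base trait $t_0 = \{\MH\}$ is well-typed because its single abstract method is accepted by rule \textsc{AbsMOK} and hence the whole body by \textsc{BodyTyped}; this is what makes ``starting from a fully abstract specification'' a legitimate, already-verified starting point. This discharges the standing well-typedness assumption on the source program required by General Soundness.

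Next I would invoke the flattening diagram given in the hypothesis, which states precisely that $\Ds$ flattens to a table $\Ds'$ in which each $t_i = \{\Ms_i\}$ flattens to itself and $C = t_0 + \dots + t_n$ flattens to $C = \Body$. Applying Theorem~\ref{theorem:soundness} to $\Ds$ then yields that every definition of $\Ds'$ is well-typed, i.e. $\Ds';\ \Name \vdash \Body'' : \OK$ for each $\Name = \Body'' \in \Ds'$. Instantiating this with the class definition $C = \Body \in \Ds'$ gives $\Ds';\ C \vdash \Body : \OK$, which is exactly the conclusion.

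The step I expect to need the most care is matching the $n$-ary ``whole refinement process'' $C = t_0 + \dots + t_n$ against the binary \textsc{+Flat} rule and Definition~\ref{def:bodyplus}, since the theorem stresses that the composition is built up one refinement step at a time ($t_0 + t_1$ first, then $+\,t_2$, and so on). To keep the reduction robust I would run an auxiliary induction on the number $n$ of refinement steps, applying Lemma~\ref{lemma:flattening} at each step to carry well-typedness through one body composition, and appealing to the assumed commutativity and associativity of the parametric logic's \emph{and} (and transitivity of its \emph{implication}) to argue that the parenthesization and order of the refinement steps do not affect the resulting well-typed body. The mild subtlety is that the intermediate composites $t_0 + \dots + t_i$ are not themselves named traits in $\Ds$, so to apply Lemma~\ref{lemma:flattening} verbatim one treats each intermediate result as a fresh body shown well-typed by the induction hypothesis --- or, more economically, one simply lets the induction over the flattening derivation already internal to the proof of Theorem~\ref{theorem:soundness} absorb this bookkeeping.
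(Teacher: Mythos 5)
Your proposal is correct and follows exactly the paper's own route: the paper proves Theorem~\ref{theorem:cbcsound} in one line as ``a special case of Theorem~\ref{theorem:soundness}.'' Your additional bookkeeping --- checking well-formedness of the assembled $\Ds$, noting that $t_0$ is accepted via \textsc{AbsMOK} and \textsc{BodyTyped}, and the auxiliary induction on $n$ through Lemma~\ref{lemma:flattening} to handle the $n$-ary composition --- simply makes explicit the details the paper leaves implicit.
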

\begin{proof}
	This is a special case of Theorem~\ref{theorem:soundness}.
\end{proof}

Theorem~\ref{theorem:cbcsound} shows clearly that trait composition intrinsically enables a CbC refinement process: 
A object-oriented programming language with traits
and a corresponding specification language supports an incremental CbC approach.

\section{Trait-based Correctness-by-Construction in Comparison to Classical CbC}
\label{sec:discussion}
In this section, we discuss the benefits of TraitCbC in comparison to classical CbC. To do this, we describe classical CbC first.

Classical correctness-by-construction (CbC)~\cite{dijkstra-book,kourie2012correctness,morgan-book} is an incremental approach to construct programs. 
CbC uses a \textit{Hoare triple} specification $\hoare{P}{S}{Q}$ stating that if the precondition $\mathtt{P}$ holds, and the statement $\mathtt{S}$ is executed, then the statement terminates and postcondition $\mathtt{Q}$ holds.
The CbC refinement process starts with a Hoare triple where the statement $\mathtt{S}$ is abstract. This abstract statement can be seen as a hole in the program that needs to be filled. 
With a set of refinement rules, an abstract statement is replaced by more concrete statements (i.e., statements in the guarded command language~\cite{dijkstra-book} that can contain further abstract statements). The process stops, when all abstract statements are refined to concrete statements so that no holes remain in the program. 
As each refinement rule is sound and each correct application of a refinement rule guarantees to satisfy the starting Hoare triple, the resulting program is correct-by-construction~\cite{kourie2012correctness}.
The CbC process is strictly tied to a set of predefined refinement rules. A programmer cannot deviate from this concept.
To apply a refinement rule, it has to be checked that conditions of the rule application are satisfied. This is done by pen-and-paper or with specialized tools~\cite{runge2019tool}.

\begin{table*}[tb]
	\restoretx
	\centering
	\begin{scriptsize}
	\begin{tabularx}{\linewidth}{nXX}
		\toprule
		 & Classic CbC	& TraitCbC\\ 
		\hline
		Language & Additional rules for a programming language.	& Programming language with traits. Needs specification language.\\
		\hline 
		Tool \newline support & Pen and paper. Some specialized tools available. & Relies on prevalent PhV verification tools.\\ 
		\hline
		Construc- tion Rules & Specific refinement rules. & Refinement by composition of traits.\\ 
		\hline 
		Debugging & Guarantees the correctness of each refinement step. Only refinements without abstract statement are directly verified. & Guarantees the correctness of each refinement step. Each method is specified such that each refinement can directly be verified.\\
		\hline
		Proof\newline complexity & Many, but small proofs. & Any granularity of proofs.\\
		\hline
		Reuse & 
Refinement steps cannot be reused; only fully implemented methods can.
 & Each verified method in a trait can be reused.\\
 		\hline
 Applications & Focuses on small but correctness-critical algorithms.
 & As TraitCbC is based on PhV, it can be used in areas of PhV. Additionally, traits are beneficial for incremental development approaches and development of software product lines.\\
		\bottomrule
	\end{tabularx}
	\end{scriptsize}
	\caption{Comparison of TraitCbC with classical CbC}
	\label{tab:compare}
\end{table*}

In Table~\ref{tab:compare}, we compare TraitCbC and classical CbC: 

\noindent\emph{Language} The classical CbC approach is external to a programming language. It needs the definition of refinement rules. TraitCbC is usable with languages that have traits, a specification language, and a corresponding verification framework.
In this work, we focus on object-orientation, but the general TraitCbC programming guideline presented in this paper is also suitable for functional programming environments using abstract and concrete functions with specifications instead of traits and methods.

\noindent\emph{Tool Support}
To use one of the approaches, tool support is desired. For classical CbC, mostly pen and paper is used. There are a few specialized tools such as CorC~\cite{runge2019tool}, tool support for ArcAngel~\cite{oliveira2003arcangel}, and SOCOS~\cite{back2009invariant,back2007testing}. These tools force a certain programming procedure on the user. This procedure can be in conflict with their preferred programming style.
For TraitCbC, tools for post-hoc verification can be reused. There are tools for many languages such as Java~\cite{ahrendt2016deductive}, C~\cite{cohen2009vcc}, C\#~\cite{Barnett:2011:SVS:1953122.1953145,barnett2004spec}.
Other languages are integrated with their verifier from the start, e.g., Spec\#~\cite{barnett2004spec} and Dafny~\cite{leino2010dafny}.
 TraitCbC as presented in this paper is a core calculus, designed to show the feasibility of the concept. 
We believe that scaling up TraitCbC to a complete programming language reusing existing verification techniques would be feasible
and would result in a similarly expressive verification process, but supporting more flexible program composition.
In Section~\ref{sec:impl}, we show how a prototype can be constructed by using the KeY verifier~\cite{ahrendt2016deductive}.

\noindent\emph{Construction Rules}
To construct a program, classical CbC has a strict concept of refinement rules. A programmer cannot deviate from the granularity of the rules. In contrast, PhV does not give a mandatory guideline how to construct programs. TraitCbC is a bridge between both extremes. Programs can be constructed stepwise as with classical CbC, but if desired, any number of refinement steps can be condensed up to PhV based programming.

\noindent\emph{Debugging}
If errors occur in the development process, TraitCbC gives early and detailed information. By specifying the method under development and any abstract method that is called by this method, we can directly verify the correctness of the method under development.
We assume that the introduced abstract methods will be correctly implemented in further refinement steps. With each step, the programmer gets closer to the solution until finally all abstract methods are implemented. Classical CbC relies on the same process, but here the abstract statements (similar to our abstract methods) are not explicitly specified by the programmer. Additional specifications in classical CbC are introduced only with some rules such as an intermediate condition in the composition rule. Then, these specifications are propagated through the program to be constructed. When arriving at a leaf in the refinement process, the correctness of the statement can be guaranteed. The problem in classical CbC is that all refinement steps where abstract statements occur cannot be verified directly. In the worst case, a wrong specification is found only after a few refinement steps.

\noindent\emph{Proof Complexity}
TraitCbC can have the same granularity and also the same proof effort as classical CbC, since each method implementation can correspond to just one refinement step. The advantage of TraitCbC is that programmers can freely implement a method body. They must not stick to the same granularity as in the classical CbC refinement rules. As in PhV, they can implement a complete method in one step. 
The programmer can balance proof complexity against verifier calls.

\noindent\emph{Reuse}
If we want to reuse developed methods or refinement steps, the approaches differ. In classical CbC, no refinement steps can be reused. A fully refined method can be reused in both approaches. For TraitCbC, we can easily reuse even very small units of code, since they are represented as methods in the traits.

\noindent\emph{Applications}
The classical CbC approach does not scale well to development procedures for complete software system. Rather, individual algorithms can be developed with CbC~\cite{watson2016correctness}. As soon as we scale TraitCbC to real languages, we have the same application scenarios as PhV. As argued by Damiani et al.~\cite{damiani2014verifying} traits enable an incremental process of specifying and verifying software. Bettini et al.~\cite{bettini2010implementing} proposed to use traits for software product line development and highlighted the benefits of fine-grained reuse mechanisms. Here, TraitCbC's guideline is suitable for constructing new product lines step by step from the beginning.

\noindent\emph{Summary}
In summary, TraitCbC bridges the gap between PhV and CbC. It enables a CbC process for trait-based languages without introducing refinement rules.
The concrete realization of specifying and verifying methods is similar to PhV, but additionally to PhV, TraitCbC provides an incremental development process. This development process combined with the flexibility of traits allows correct methods to be developed in small and reusable steps. Moreover, we have introduced a core calculus and proved that the construction and composition of trait-based programs is correct.

\section{Proof-of-Concept Implementation}
\label{sec:impl}

In this section, we describe the implementation, which instantiates TraitCbC in Java with JML~\cite{leavens1998jml} as specification language and KeY~\cite{ahrendt2016deductive} as verifier for Java code. Our trait implementation is based on interfaces with default implementation. Our open source tool is implemented in Java and integrated as plug-in in the Eclipse IDE.\footnote{Tool and evaluation at \url{https://github.com/TUBS-ISF/CorC/tree/TraitCbC}}
Besides this prototype, other languages with a suitable verifier, such as Dafny~\cite{leino2010dafny} and OpenJML~\cite{cok2011openjml}, can also be used to implement TraitCbC.

\hl{In Listing~\ref{code:exampleJava}, we show the concrete syntax. Each method in a trait is specified with JML with the keywords \Q@requires@ and \Q@ensures@ for the pre- and postcondition. 
To verify the correctness of programs, we need two steps. First, we verify the correctness of a method implemented in a trait w.r.t. its specification. Second, for trait composition, our implementation checks the correct composition for all methods (cf. Definition~\ref{def:bodyplus}). It is verified that the specification of a concrete method satisfies the specification of the abstract one with the same signature (cf. Definition~\ref{def:mplus}). These verification goals are sent to KeY, which starts an automatic verification attempt. The syntax of trait composition is shown in line 24. In a separate tc-file, the name of the resulting trait is given and the composed traits are connected with a plus operator.}

\begin{lstlisting}[caption={Example in our implementation},captionpos=b,label={code:exampleJava}]
public interface MaxElement1 {
/*@ requires list.size() > 0;
  @ ensures (\forall int n; list.contains(n);
  @ \result >= n) & list.contains(\result);
  @*/
  public default int maxElement(List list) {
    if (list.size() == 1) return accessHead(list);
    if (list.element() >= maxElement(list.tail())) 
      { return accessHead(list) }
    else { return maxTail(list) } }

/*@ requires list.size() > 0;
  @ ensures \result == list.element();
  @*/
  public int accessHead(List list);

/*@ requires list.size() > 1;
  @ ensures (\forall int n; list.tail().contains(n);
  @ \result >= n) & list.tail().contains(\result);
  @*/
  public int maxTail(List list);
}

ComposedMax = MaxElement1 + MaxElement2
\end{lstlisting}

\paragraph{Evaluation}
 
We evaluate our implementation by a feasibility study. First, we reimplemented an already verified case study in our trait-based language. We used the IntList~\cite{scholz2011intlist} case study, which is a small software product line (SPL) with a common code base and several features extending this code base. Here, we can show that our trait-based language also facilitates reuse.
The IntList case study implements functionality to insert integers to a list in the base version. Extensions are the sorting of the list and different insert options (e.g., front /back). We implement five methods that exists in different variants with our trait-based CbC approach.
We implement the case study in different granularities. The coarse-grained version is similar to the SPL implementation we started with~\cite{scholz2011intlist}, confirming that traits are also amenable to implement SPLs as shown by Bettini et al.~\cite{bettini2010implementing}. The fine-grained version implements the five methods incrementally with 12 refinement steps. We can reuse 6 of these steps during the construction of method variants.

We also implement three more case studies BankAccount~\cite{thuem2012familybased}, Email~\cite{hall2005fundamental}, and Elevator~\cite{plath2015elevator} with TraitCbC and CbC to show that it is feasible to implement object-oriented programs with both approaches. We used CorC~\cite{runge2019tool} as an instance of a CbC tool.
We were able to implement nine classes and verify 34 methods with a size of 1-20 lines of code. For future work, a user study is necessary to evaluate the usability of TraitCbC in comparison to CbC to confirm our stated advantages.

\section{Related Work}

Traits are introduced in many languages to support clean design and reuse, for example Smalltalk~\cite{ducasse2006traits}, Java~\cite{bono2014trait} by utilizing default methods in interfaces, and other Java-like languages~\cite{bettini2013traitrecordj,liquori2008feathertrait,smith2005chai}.
The trait language TraitRecordJ was extended to support post-hoc verification of traits~\cite{damiani2014verifying}. The authors added specifications of methods in traits for the verification of correct trait composition and proposed a modular and incremental verification process.
None of these trait languages were used to formulate a refinement process to create correct programs. They only focus on code reuse or post-hoc verification.



Automatic verification is widely used for different programming languages.
The object-oriented language Eiffel focuses on design-by-contract~\cite{meyer1988eiffel,meyer1992applying}. All methods in classes are specified with pre-/postconditions and invariants for verification purposes. The tool AutoProof~\cite{khazeev2016initial,tschannen2015autoproof} is used to verify the correctness of implemented methods. It translates methods to logic formulas, and an SMT solver proves the correctness. For C\#, programs written in the similar language Spec\#~\cite{barnett2004spec} are verified with Boogie. That is, code and specification are translated to an intermediate language and verified~\cite{Barnett:2011:SVS:1953122.1953145}. For C, the tool VCC~\cite{cohen2009vcc} reuses the Spec\# tool chain to verify programs. The tool VeriFast~\cite{jacobs2010quick} is able to verify C and Java programs specified in separation logic. For Java, KeY~\cite{ahrendt2016deductive} and OpenJML~\cite{cok2011openjml} verify programs specified with JML.
TraitCbC is parametric in the specification language, meaning that a trait-based language with a specification language and a corresponding program verifier can be used to instantiate TraitCbC. In our implementation, we use KeY~\cite{ahrendt2016deductive} to prove the correctness of methods and trait composition.

Event-B~\cite{EventB} is a related correctness-by-construction approach. In Event-B, automata-based systems are specified and refined to a concrete implementation. Event-B is implemented in the Rodin platform~\cite{abrial2010rodin}. In comparison to CbC by Kourie and Watson~\cite{kourie2012correctness} as used in this paper, Event-B works on a different abstraction level with automata-based systems instead of program code. The CbC approaches of Back et al.~\cite{back2012refinement} and Morgan~\cite{morgan-book} are also related. Back et al.~\cite{back2012refinement} start with explicit invariants and pre-/postconditions to refine an abstract program to a concrete implementation, while Kourie and Watson only start with a pre-/postcondition specification.
These refinement approaches use specific refinement rules to construct programs which are external to the programming language. With TraitCbC, we propose a refinement procedure that is part of the language by using trait composition.

Abstract execution~\cite{steinhofel2019abstract} verifies the correctness of methods with abstract, but formally specified expressions. Abstract Execution is similar to our refinement procedure where abstract methods are called in methods under construction. The difference is that abstract execution extends a programming language to use any expression in the abstract part, not only method calls. Therefore, abstract execution can better reason about irregular termination (e.g., break/continue) of methods. In comparison to TraitCbC, abstract execution is a verification-centric approach without a guideline on how to construct programs.

\hl{Synthesis of function summaries is also related~\cite{hoare1971procedures,Chen2015syn,sery2012summaries}. Here, verification tools automatically synthesize pre-/postconditions from functions to achieve modular verification and speed up the verification time. In comparison, TraitCbC is a complete software development approach where specification and code are developed simultaneously by a developer to achieve a correct solution. Function summaries are just a verification technique.}

\section{Conclusion}

In this work, we present TraitCbC that guides programmers to correct implementations. In comparison to classical CbC, TraitCbC uses method calls and trait composition instead of refinement rules to guarantee functional correctness.
We formalize the concept of a trait-based object-oriented language where the specification language is parametric to allow a broader range of languages to adopt this concept. The main advantage of TraitCbC is the simplicity of the refinement process that supports code and proof reuse.

As future work, we want to investigate how TraitCbC can be used to construct software product lines. As proposed by Bettini et al.~\cite{bettini2010implementing}, trait languages are able to implement SPLs. We want to extend the guideline of TraitCbC to construct SPLs with a refinement-based procedure that guarantees the correctness of the whole SPL.
To reduce specification effort in TraitCbC, inheritance of traits is useful as future expansion. Another option is to integrate the concept of Rebêlo et al.~\cite{Rebelo2014AspectJML} which supports the design-by-contract approach with Aspect-JML and integrates crosscutting contract modularization to reduce redundant specifications.

Since TraitCbC is parametric in the specification logic,
TraitCbC's soundness only holds if such logic is consistent when
composed in the presented manner.
In particular, the logic needs to take into account
ill-founded specifications and non-terminating recursion.
In verification, ill-founded specifications and termination
issues are often considered as a second step\footnote{For example, Dafny approximately checks that the functions used in a specification form an acyclic graph.}, separately from the
verification of individual methods, and our prototype still does not yet
take care of this second step.
That means that methods are verified under the assumption that all
other methods respect their contracts.
If ill-founded specifications and non-terminating recursion are handled
naively, verification might be unsound because of ill-founded
reasoning.
Appendix~\ref{sec:circularity} shows that this problem is even more pervasive in the
case of trait composition or any other form of multiple inheritance:
naive composition of correct traits may produce incorrect results.

\bibliographystyle{splncs03}
\bibliography{BibliographyShort}

\appendix
\section{Appendix}

\subsection{Typing Rules}
\label{sec:typing}

\begin{description}
	\item[\textsc{x}.] As usual, the type of a variable is stored in the environment $\Gamma$.
	From the verification perspective, we do not need to prove anything to be allowed to use a variable; thus we use $\mathit{true}$.
	We know that the result of evaluating a variable is the value of such variable, and that such value is of the type of the variable; thus we have $\mathit{\result:\Gamma(x)\ \&\ \result = x}$. The $\result$ is the returned value of evaluating this expression, and $\mathit{variable:type}$  is a predicate in our system.
	As you can notice, we are assuming that our parametric logic supports at least a logical \textit{and} ($\&$); but of course other ways to merge knowledge could work too.
	
	\item[\textsc{Method}.] As usual, to type a method call, we inductively type the receiver and
	all the parameters. In this way, we obtain all the types $C_0 \dots C_n$, all the knowledge
	$P_0 \dots P_n$, and all the verification obligations $P'_0 \dots P'_n$.
	Inside of all conditions  $P_i \models P'_i$ we call the result of $e_i$ $\result$.
	We cannot simply merge the knowledge of $P_0 \dots P_n$, since their
	$\result$ refers to different concepts. Thus, we chose fresh $x'_0 \dots x'_n$ variables, and we rename $\result$ of $P_i$
	and $P'_i$ into $x'_i$.
	Similar, $S'$ is the specification of the method adapted using $x'_0 \dots x'_n$.
	
	The verification obligation of course contains
	all the obligations of the receiver and the parameters, but also
	requires the precondition of the method to hold.
	
	The knowledge contains the knowledge of the receiver and the
	parameters, and the method specification in implication form.
	Naively, one could expect that since the precondition is already in
	the obligation we could simply add the postcondition to the
	knowledge.
	This would be unsound. By using the specification in implication form, the system prevents circular reasoning: we could otherwise
	use the postcondition to prove the precondition.
	Instead, when the system shows that the precondition of $S'$ holds,
	it can assume the postcondition of $S'$.
	Similar to logical \textit{and} above, we are assuming that our parametric logic supports at
	least logical \textit{implication}, but of course other forms of logical consequence could work too.
	
	Note that the postcondition will contain information about the
	result of the method body as information on the $\result$ variable.
	
	\item[\textsc{New}.] As usual, to type an object instantiation, we inductively type all the
	parameters. In this way we obtain all the types $C_1 \dots C_n$, all the knowledge
	$P_1 \dots P_n$, and all the verification obligations $P'_1 \dots P'_n$.
	As we did for \textsc{Method} we use fresh variables to be able to compose predicates.
	
	As we mentioned above, we rely on abstract state operations to represent state:
	that is, all the abstract methods in $C$ need to be of form $S_i\ \mathtt{method}\ C_i\ x_i();$
	where $\this.x_i()$ returns the value of field $x_i$, that in turn was
	initialized with the result of expression $e_i$. The function $\mathit{getters}(C)$ returns all methods of this form.
	
	Knowledge $P''_i$ contains the knowledge of $P_i$ (from expression $e_i$) and it links
	such knowledge to the result of calling method $\result.x_i()$, so that
	calling a getter on the created object will return the expected value.
	However, the information is conditional over verifying the precondition
	of such getter.
	Note that we do not need to add the knowledge of the postcondition of
	$x_i()$ here; this will be handled by the \textsc{Method} rule when $x_i()$ is called.
	
	Knowledge $P$ is simply merging the accumulated knowledge; while the final obligation
	in addition to merging the accumulated obligations also requires that the
	precondition of all the getters hold. In this way the getter
	preconditions behave like the precondition of the constructor.
	By requiring those preconditions, we ensure that we can call the
	getters on all the created objects.
	\item[\textsc{Sub}.] The subsumption rule is standard. We allow subtyping between class names. Note that we do not apply weakening and strengthening of conditions here.
\end{description}
Besides of typing correct programs, the typing rules of Trait-CbC have the goal to verify the correctness of method implementations.
The following rules check whether a method or a $\mathit{Body}$ are correct. The check for a correct method declaration in \textsc{MOK} calls a program verifier to verify the correctness. We need just one verifier call for the verification of each method because the rules above collected all needed knowledge and obligations.
\begin{description}
	\item[\textsc{MOK}.] 
	In \textsc{MOK}, we construct a $\Gamma$, and we type the method body, obtaining
	knowledge $P$ and obligation $P'$.
	The program verifier will know the type information of $\Gamma$, the premise of the method,
	and the knowledge $P$, and will prove the obligation $P'$ and the postcondition of the method. This verification in the typing rule is indicated by the keyword \textbf{verify}. 
	Here, we use implication, but a different program verifier may use a
	different form of logical consequence.
	The program verifier can access the specification of all the other methods since
	we also provide the declaration table.
	\item[\textsc{AbsMOK}.] Abstract methods are correctly typed.
	\item[\textsc{BodyTyped}.] A $\Body$ is correctly typed, if all the methods in the declaration of the $\Body$ are correctly typed.
\end{description}

\paragraph{Example on Typing}
To illustrate an example with explicit proof goals, we use the verification of the method \Q@accessHead@. In the body of \Q@accessHead@, the method \Q@list.element()@ is called. It is a getter-method with the simple postcondition \Q@result == list.element()@.
The proof obligation created by the rule \textsc{MOK} is the following. In the typing rule, the expression $e$ is the call of the method \Q@list.element()@. We use the rule \textsc{Method} to type this expression, and we obtain the following proof obligation:

$\Gamma\ \&\ $
\Q@result@
$:$
\Q@Num@
$\ \&$
\Q@list@
$ = x'_0\ \&\ (\mathit{true}$
$ \Rightarrow $
\Q@result@
$ = x'_0$
\Q@.element()@
$)\ \models  $
\Q@result@
$ = $
\Q@list.element()@

We prove that the body of \Q@accessHead@ satisfies the pre-/postcondition specification. The precondition of the called method \Q@list.element()@ is satisfied, as it is $\mathit{true}$. Therefore, we can use the postcondition of this method to show that the postcondition of \Q@accessHead@ holds. If we replace $x'_0$ with \Q@list@, we have the same condition in our pre- and postcondition and can close the goal. Thus, the method \Q@accessHead@  is proven correct.

\subsection{Reduction Rules}
\label{sec:reduction}

We formulate three reduction rules for our system to evaluate input expressions to final values. We introduce an evaluation context $\E_v$ in our syntax in Fig.~\ref{f:syntax} to define the order of evaluation. The rules of Fig.~\ref{f:reduction} are explained in the following.

\begin{figure}[t]
	\centering
	\begin{scriptsize}
		\begin{mathpar}
			\inferrule
			{
				\Ds \vdash e \rightarrow e'
			}{
				\Ds \vdash \E_v[e] \rightarrow \E_v[e']}
			\quad (\textsc{$Ctx$})
			
			\inferrule
			{
				S\ \methodKW\ C\ m(C_1\ x_1,\ \dots,\ C_n\ x_n)\ e;\ \in\ \mathit{methods(C)}
			}{
				\Ds \vdash \newKW\ C(\vs).m(v_1\dots v_n) \rightarrow e[\this=\newKW\ C(\vs),\ x_1=v_1,\ \dots,\ x_n=v_n]}
			(\textsc{mcall})
			
			\inferrule
			{
				\mathit{abs}(\Ds(C))=S_1\ \methodKW\ C_1\ x_1();\ \dots\ S_n\ \methodKW\ C_n\ x_n();
			}{
				\Ds \vdash \newKW\ C(v_1 \dots v_n).x_i() \rightarrow  v_i}
			\quad (\textsc{getter})
		\end{mathpar}
	\end{scriptsize}
	\caption{Reduction rules}
	\label{f:reduction}
\end{figure}

\begin{description}
	\item[\textsc{Ctx}.] This is the conventional contextual rule, allowing the execution of subexpressions.
	\item[\textsc{Mcall}.] We reduce a method call to an expression $e$, where the receiver is replaced with \newKW\ $C(\vs)$, and each parameter $x_i$ with the actual value $v_i$.
	We also ensure that the method is declared in the class $C$.
	\item[\textsc{Getter}.] In our formalism, abstract methods without arguments represents getters. Notation $\mathit{abs(Body)}$ returns the set of all abstract methods in $\Body$. A valid class can only have
	abstract methods without arguments, and they will all represent getters.
\end{description}

\subsection{Proofs}
\label{sec:proofs}

Proof of Lemma~\ref{lemma:flattening}.

\begin{proof}
	We prove by contradiction. 
	We assume the resulting $\Body$ is ill typed.
	By definition of \textsc{BodyTyped}, it means that one of the methods cannot be
	typed with either \textsc{AbsMOK} or \textsc{MOK}.
	The list of methods that need to be typed is obtained by Definition~\ref{def:bodyplus}.
	
	Abstract methods can only be typed with \textsc{AbsMOK} and are never wrong.
	Implemented methods can only be typed with \textsc{MOK}.
	If $\Gamma \vdash e : C \dashv P \models P'$
	or the other precondition $\textbf{verify}\ \Ds \vdash (\Gamma\ \&\ \Pre(S)\ \&\ P) \models (P'\ \&\ \Post(S))$ does not hold, 
	it means that there was a method $m_i$ with expression $e_i$ in $\Body_1$ (or
	symmetrically for $\Body_2$)
	that was well-typed under $\Ds; t_1\vdash \Body_1$. That means that all of its implemented methods were well-typed and verified.
	Typing $e_i$ produces
	$P_i \models P'_i$ by using a $\Gamma_{t1}$ containing $\this:t_1$.
	
	If $\Ds; \Name \vdash \Body : \OK$ is not applicable, the same
	expression $e_i$
	was typed using a $\Gamma_{\Name}$ containing $\this:\Name$. It produced $P''_i \models P'''_i$
	so that $\textbf{verify}\ \Ds \vdash (\Gamma_{\Name}\ \&\ \Pre(S)\ \&\ $ $P''_i) \implies (P'''_i\ \&\ \Post(S))$ does not hold.
	We know that $\textbf{verify}\ \Ds \vdash (\Gamma_{t1}\ $ $\&\ \Pre(S)\ \&\ P_i) \implies (P'_i\ \&\ \Post(S))$ holds by our assumption.
	By Definition~\ref{def:mplus}, the contracts of the methods in $\Body$ are simply stronger than the contracts of the methods in $\Body_1$.
	The only difference between $P''_i \models P'''_i$ and $P_i \models P'_i$ is in
	the contracts of methods called on $\this$.
	Assuming that our parametric logic implication is transitive,
	we know that $\textbf{verify}\ \Ds \vdash (\Gamma_{t1}\ \&\ \Pre(S)\ \&\ P_i) \implies (P'_i\ \&\ \Post(S))$ entails $\textbf{verify}\ \Ds \vdash (\Gamma_{\Name}\ \&\ \Pre(S)\ $ $\&\ P''_i) \implies (P'''_i\ \&\ \Post(S))$,
	thus we reach a contradiction.
\end{proof}

\noindent Proof of Lemma~\ref{lemma:makeAbstract}.

\begin{proof}
	We prove by contradiction. 
	We assume the resulting $\Body'$ is ill typed.
	By definition of \textsc{BodyTyped}, it means that one of the methods cannot be
	typed with either \textsc{AbsMOK} or \textsc{MOK}.
	The list of methods that need to be typed is obtained by Definition~\ref{def:bodyplus}.
	
	Abstract methods can only be typed with \textsc{AbsMOK} and are never wrong.
	We know that $\Body$ is typable by our assumption. The only difference between $\Body$ and $\Body'$ is that the method $m$ is made abstract. As we have seen for Lemma~\ref{lemma:flattening}, we are typing $\Body'$ in a different $\Gamma$. This case is even simpler than Lemma~\ref{lemma:flattening} because $\Body$ and $\Body'$ have exactly the same specifications. The abstract method $m$ and thus $\Body'$ cannot be ill typed.
\end{proof}

\noindent Proof of Theorem~\ref{theorem:soundness}.

\begin{proof}
	By induction on the size of $\Ds$, and by induction on cases of $E$ (the
	applied flattening rule for $E$).
	\begin{itemize}
		\item $\Body$ only flattens if the $\Body$ can be shown to be well-typed.
		\item $t$ only reads a trait from the already verified $\Ds'$.
		\item $\Body_1+\Body_2$ is correct with Lemma~\ref{lemma:flattening}. The lemma can be applied directly, if $E$ is of depth one (e.g., $\Body_1 + \Body_2$). If $E$ is more complex, we have to apply other cases of this case analysis.
		\item $\mathtt{makeAbstract}$ is handled similarly using Lemma~\ref{lemma:makeAbstract}.
		\item By the flattening relation, we know that $\Body_1$ and $\Body_2$ are well-typed in $\Ds$.
		If we start from a program containing only well-typed and verified traits,
		any new class we can define by just composing those traits is well
		typed and verified.
	\end{itemize}
\end{proof}

\subsection{Circularity Issue}
\label{sec:circularity}
A major problem of verification is the occurrence of circularity in the specification of methods~\cite{ahrendt2016deductive}. In the following code snippet, we have two traits. The first trait $A$ implements a method $a$ and requires a method $b$. The second trait $B$ implements a method $b$ and requires a method $a$. Both traits on their own can be verified (It will be assumed that the abstract methods satisfy their contract).

\begin{lstlisting}
trait A {
  @Post: result = b()
  Num a() = 5

  @Post: result = 5
  Num b();
}

trait B {
  @Post: result = 5
  Num a();

  @Post: result = a()
  Num b() = 5
}
\end{lstlisting}

If we could compose both traits, the flattened program would contain the implementation of $a$ and $b$ with the contracts of the implemented ones. 
Even if we modularly verify the methods, we can only prove $a$ under the assumption that $b$ is correct, and $b$ under the assumption that $a$ is correct, which again is circular.
The result of method $a$ is equal to the result of $b$, but the result of $b$ is specified as the result of $a$. 
This verification problem is well-known~\cite{ahrendt2016deductive} but it cannot be solved easily. With appropriate analysis techniques, loops in code or specification can be detected and marked for the programmer. In our proposed language, we can construct examples, such as the one we presented, where the circle is obscured by trait composition. This makes reasoning very challenging.

\end{document}